\newtheorem{remark}{Remark}
\newtheorem{theorem}{Theorem}[section]
\begin{document}
\title{
\bf
Optical Flow Sensing and the Inverse Perception Problem for Flying Bats
}
\author{ Zhaodan Kong,  Kayhan \"Ozcimder, Nathan Fuller, Alison Greco, Diane Theriault, Zheng Wu, Thomas Kunz, Margrit Betke,  John Baillieul\\[0.1in]
\thanks{Research support by the Office of Naval Research through the ODDR\&E MURI10 Program Grant Number N00014-10-1-0952 as well as  the U.S.\ National Science Foundation HCC Grant Number IIS-071322 is gratefully acknowledged.}
\thanks{Kong, \"Ozcimder, and Baillieul are with the College of Engineering; Fuller,  Greco, and Kunz are with the Department of Biology; Theriault, Wu, and Betke are with the Department of Computer Science; all at Boston University.}
\thanks{Corresponding author is Baillieul: {\tt johnb@bu.edu}}}
\IEEEaftertitletext{\vspace{-2\baselineskip}} 
\maketitle

\vspace{0.1in}
\begin{abstract}
The movements of birds, bats, and other flying species are governed by complex sensorimotor systems that allow the animals to react to stationary environmental features as well as to wind disturbances, other animals in nearby airspace, and a wide variety of unexpected challenges.   The paper and talk will describe research that analyzes the three-dimensional trajectories of bats flying in a habitat in Texas.  The trajectories are computed with stereoscopic methods using data from synchronous thermal videos that were recorded with high temporal and spatial resolution from three viewpoints.  Following our previously reported work, we examine the possibility that bat trajectories in this habitat are governed by optical flow sensing that interpolates periodic distance measurements from echolocation.  Using an idealized geometry of bat eyes, we introduce the concept of {\em time-to-transit}, and recall some research that suggests that this quantity is computed by the animals' visual cortex.  Several steering control laws based on time-to-transit are proposed for an idealized flight model, and it is shown that these can be used to replicate the observed flight of what we identify as {\em typical bats}. Although the vision-based motion control laws we propose and the protocols for switching between them are quite simple, some of the trajectories that have been synthesized are qualitatively bat-like.  Examination of the control protocols that generate these trajectories suggests that bat motions are governed both by their reactions to a subset of key feature points as well by their memories of where these feature points are located. \end{abstract}

\section{Intorduction}\setcounter{equation}{0}
\label{sec:jb:Intro}

This paper describes research aimed at using observed flight trajectories of a species of bats to understand the ways in which the animals use sensory perceptions of their environments to control their motions.  The work is based on data recovered from a large collection of 3-dimensional video records of {\em Myotis velifer}, emerging from a cave on the Bamberger Ranch Preserve in Johnson City, Texas.  These are cave-roosting bats that live in southern North America and Central America and are a large bodied species of the {\em Myotis} genus, weighing about 14 grams and having a wingspan of 30 cm (\cite{Farney},\cite{Fitch}).

Bats perceive features within their environment using complex combinations of sensory organs.  Many species, including the {\em M.~velfer} studied in the present paper, are able to perceive distances to objects by means of echolocation.  It is likely, however, that vision also plays a role in guiding the these animals as they fly.  Similar to other Yangochiroptera (microbats), \textit{M.~velifer} have relatively small eyes that are principally directed sidewards from opposite sides of the head.  Their retinas are rod-dominated, making them well suited to their nocturnal niche. In addition, their retinas contain dense horizontal connections but few vertical connections, which suggests that they are specialized to detect motions and contours under nocturnal illumination at the expense of high feature discrimination acuity~\cite{neuweiler2000biology}. However, contrary to traditional belief that bats possess a simplistic visual system, recent evidence suggests that several bat species, including {\em M. velifer}, have functional S opsin genes~\cite{muller2009bat}. It is unclear whether these genes are expressed in {\em M. velifer} retinae, but the presence of functional genes suggests that {\em M. velifer} may be able to see UV light and thus possess mesopic vision that is effective at dusk and dawn and on brightly moonlit nights~\cite{Wang}. The optic nerves of \textit{M.~velifer}'s left and right eyes remain separate. Each optic nerve crosses over completely to the contralateral side of the brain~\cite{neuweiler2000biology}. This anatomical characteristic suggests that, at least at the lower level, information from each eye is processed separately by the brain. 

The research presented in this paper is part of a growing body of work aimed at understanding the sensorimotor motor control of flying animals.  We refer to \cite{Justh},\cite{Sebesta},\cite{Alaeddini}, and \cite{Boardman} for examples. 
In the following sections, simple models of vision-guided horizontal flight control are proposed and evaluated in terms of their potential for explaining the observed motions of {\em M.~velifer} bats.  The organization is as follows.  Section \ref{sec:jb:BatData} describes the processing of video data and the computer rendering of observed bat trajectories.  Section \ref{sec:jb:motion} discusses the concept of {\em time-to-transit}, how this can be determined from the flow of feature images on the retinas of animals in flight, and the emerging evidence that animals use perceptions of time-to-transit to guide their flight behaviors.  Section \ref{sec:jb:distanceMaintain} proposes what we have called {\em single feature} and {\em paired feature} control laws that are based on optical flow and that can be used to steer a simple model vehicle in a way that  has the potential for replicating  reconstructed bat trajectories.  Section \ref{sec:jb:Simulations} presents motion simulations under various assumed environmental sensing protocols that exhibit characteristics of bat flight, and Section \ref{sec:jb:Discussion} draws some conclusions about what can be learned from detailed observation of animal flight behaviors coupled with flight control simulations.  We also discuss the need for further research to understand broader classes of vision-based control as well as the use of hybrid sensing whereby vision and echolocation are used synergistically.

\section{Recovery of flight paths from 3-dimensional video records}
\label{sec:jb:BatData}

Using high-resolution thermal video recordings of {\em M. velifer} emerging from a cave on the Bamberger Ranch in Texas,  three-dimensional reconstructions of 405 different trajectories were created [16, 19]. Errors inevitably appear in these reconstructions due to uncertainties arising from bats flying outside the 3D calibration region, from  occasional occlusions, and from misidentifying homologous points on the bat's body in the three views, especially when its size in at least one view is small.  Smoothing and filtering were carried out along the lines discussed in \cite{Krishna} but in this case using cubic spline smoothing $\hat\mu(\cdot)$ with a smoothing factor $\lambda=0.85$.  (See \cite{Stat2}.)  
\begin{equation}
\min_{\hat\mu}\left\{\lambda \sum_{i=1}^{n}(Y_i-\hat{\mu}(x_i))^2+(1-\lambda) \int_{x_1}^{x_n}\hat{\mu}''(x)^2dx\right\}.
\end{equation}
Over the range of the parameter, $0\le \lambda\le1$, $\lambda =0 $ corresponds to a linear least squares fit to the data, while $\lambda =1$ corresponds to a cubic spline interpolation passing through every data point.  The parameter $\lambda$ is chosen such that the smoothing is good enough for noise cancellation without loosing too much information.   Since the goal is to understand the aggregate flight characteristics of the bats, the smoothed trajectories were parameterized by arc length and truncated to have a common length from the point at which they entered the field of view to the point at which the image resolution was overly noisy because of its distance from the cameras.  The smoothed and reparameterized trajectories were then filtered to remove those that exhibited anomalies such as leaving the field of view prematurely.   Finally, over the 254 trajectory segments that were retained for study, it was noted that the bats descended at a fairly  steady rate so as to follow the descending slope of a hill.  These trajectories appeared to be largely confined to a plane, with only small deviations above or below.  Hence our initial attempt to understand how the bats' sensory perceptions were guiding their movement has been focused on models of motion control in this plane.

The planar projection of the smoothed and filtered trajectories is shown in Fig.~\ref{fig:Kayhan:Trajectories}.  The cameras that recorded the flights were located just outside and to the left of the rectangular region displayed in the figure.  The bats entered the field of view of the three video cameras from the left.  At the top of the figure, the small triangles correspond to trees in a wooded area, and at the center of the figure there are two significant features labeled {\em vine} and {\em pole}.  The vine is a natural feature that runs from the ground up to a fairly high tree branch so that the bats must fly either to the left or the right of it. The pole was placed  as a vertical marker to calibrate the cameras, and its height was such that bats could either fly over or around it. 

\begin{figure}[h]
\begin{center}
\includegraphics[scale=0.46]{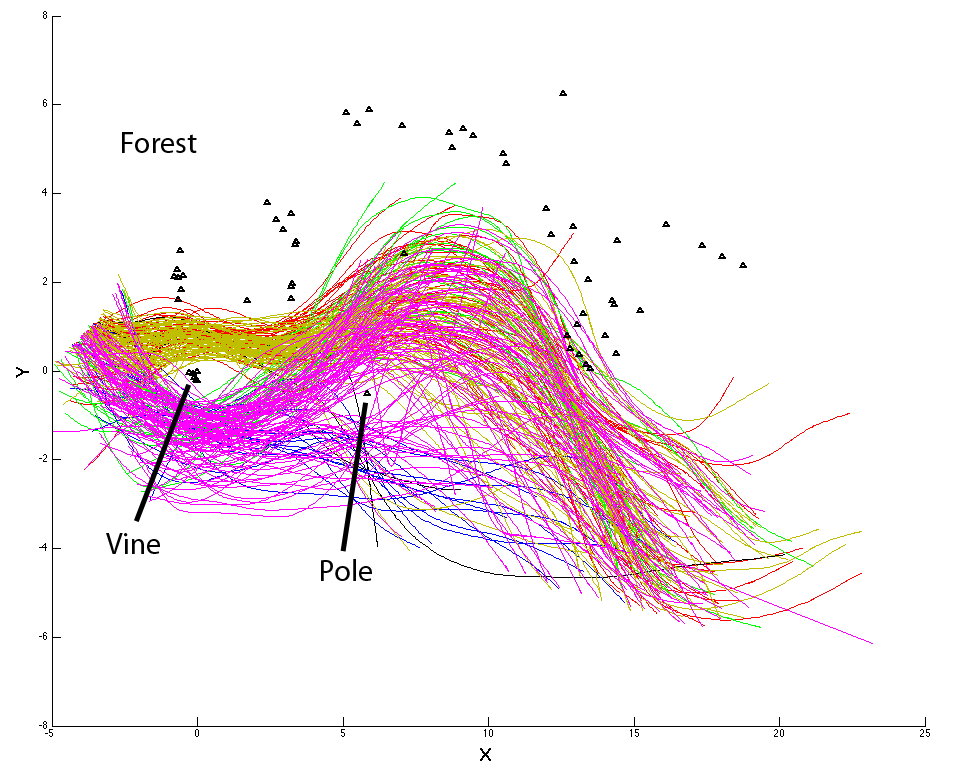}
\end{center}
\caption{Smoothed and projected trajectories. In this schematic rendering, the bats emerging from the cave enter on the left and fly to the right until they are effectively out of range.  The trajectories are rendered in different colors that reflect where the bats flew with respect to the vine (which they had to pass on the left or the right) and the pole (which they could fly around or over).  The triangles are visual features in the woods---mainly points on tree branches.}
\label{fig:Kayhan:Trajectories}
\end{figure}

We initially subdivided the 254 trajectories into six groups depending on whether they went around the vine on the left or right together with
whether they flew to the left, right or over the pole.  Visual inspection suggested that the differences among the trajectories with respect to the pole were insignificant.  Thus we divided the set of trajectories into two classes: the first comprised of 115 bats who flew to the left of the vine and the second being the 139 bats who flew to the right.  Having parameterized all trajectories by arc length, we adopted the viewpoint that we could recreate the flight path of a ``typical bat'' by computing mean trajectories within each class.  Fig.~\ref{fig:jb:BambergerWoods} below displays the mean path of the 115 bats who flew to the left of the vine  in red, and the mean path of the 139 bats who flew to the right of the vine in blue.  The green curve was synthesized by a flight control law that will be discussed in Sections \ref{sec:jb:distanceMaintain} and \ref{sec:jb:Simulations}.  The green flyer reacts more strongly to perceived environmental features than a typical bat would.  It is interesting to note that although the red, blue, and green paths all deviate from one another, they rejoin at the end of the flight domain.
To justify averaging as a way to create a typical trajectory, more analysis of the path statistics, including the consistency of speed profiles along the paths, is needed.  This analysis has been carried out, but it is beyond the scope of this short paper and will appear elsewhere. 


\section{Review of motion control using optical sensor feedback}
\label{sec:jb:motion}

Sebesta and Baillieul \cite{Sebesta} showed how the well-known optical parameter $\tau$ could be used to guide the motion of a moving optical sensor.  It is worth redoing this analysis with a special emphasis on the eye geometry of the {\em M.~velifer}.  In \cite{Lee-Reddish} and \cite{Sebesta}, $\tau$ was described in terms of a geometric picture that is most appropriate for  animals (like humans) that have a forward-facing field of view.  The eyes of the {\em M.~velifer} are shifted toward the sides of the head, greatly expanding their field of view and, at the same time, giving them higher acuity in resolving objects that are off to one side or the other (\cite{Jones}).   Because of the placement of their eyes and due to neural connectivity patterns of their photoreceptors, these bats have enhanced ability to orient themselves with respect to features in the lateral visual field.  We conjecture that the bats may use optical flow sensing of features in the lateral field to guide their motions.  In order to explore this, we introduce the concept of {\em time-to-transit} and discuss how time-to-transit is easily determined from the movement of feature images on an animal's retina or on the image plane of a camera.  Consider 
\begin{figure}[ht] 
\begin{center}\includegraphics[width=2.7in]{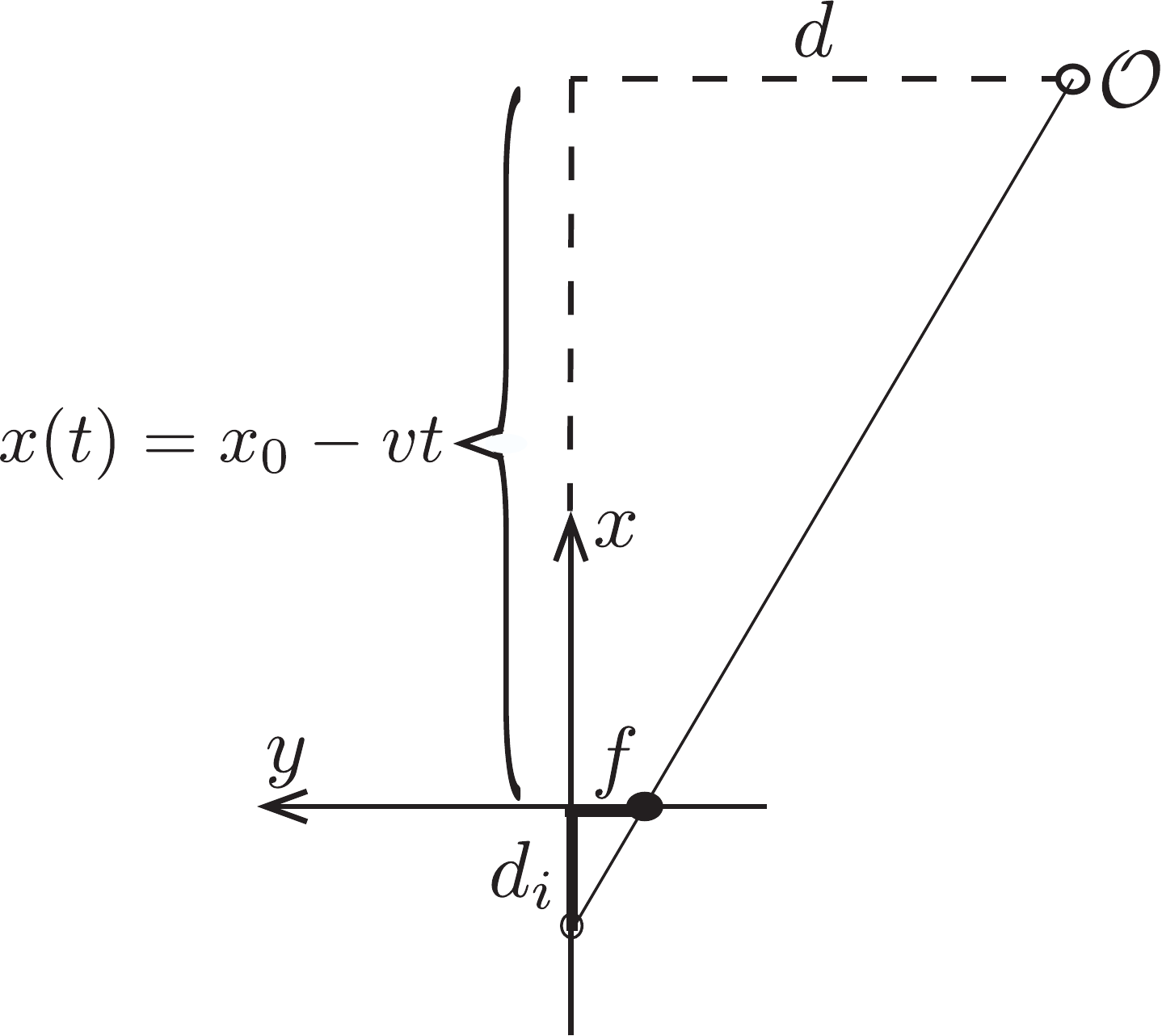}\end{center}
\caption{Optical flow of feature images for a sideward-looking imaging system.} 
\label{fig:jb:OpticalFlow} 
\end{figure}  
the idealized planar vehicle depicted in Fig.\ \ref{fig:jb:OpticalFlow}.  (Since we are interested in planar rendering of observed animal trajectories (per Section \ref{sec:jb:BatData}), models of motion in the plane are appropriate.)  The direction of motion is aligned with the vehicle body frame $x$-axis, and the feature ${\cal O}$ is observed with a pinhole camera system whose camera axis is aligned with the negative body frame $y$-axis.  The image of the point feature ${\cal O}$ is at $d_i$, a negative quantity as it is depicted in the figure.

We suppose the vehicle moves in the direction of its body frame $x$-axis at a constant speed $v$.  If the motion is initiated at a point $x_0$ along the line of flight at time $t=0$, it will cross a line that is perpendicular to the line of flight and passes through the feature point ${\cal O}$ at time $t=x_0/v$.  This quantity is called the {\em time-to-transit}, and we denote it by $\tau$.  We note that as the figure is drawn, the image point corresponding to the feature in our idealized camera lies at $d_i$ in the body frame $x$-axis, and the focal point lies at $-f$ on the body frame $y$-axis.  ($f$ is the camera focal length.)  It is clear that the similarity of triangles implies the relationship
\[
\frac{d}{x(t)-d_i(t)}=-\frac{f}{d_i(t)},
\]
and from this it follows that
\[
\frac{d_i}{\dot d_i}=\frac{x_0}{v}-t.
\]
This is zero when $t=x_0/v$ (the time at which the vehicle crosses the line perpendicular to its path and passing through ${\cal O}$).   At $t=0$, we see that $d_i/\dot d_i = x_0/v =\tau$ is the time-to-transit. The general conclusion is that if $d_i(t)$ is the location of an image feature, $\tau(t)=d_i(t)/\dot d_i(t)$ is the time remaining until the camera is directly abeam of the actual feature, provided that the speed and heading are held constant. 

This simple argument is very similar to the way in which $\tau$ was introduced in \cite{Lee-Reddish} and \cite{Sebesta}, but with an extremely important difference.  The analysis in \cite{Lee-Reddish} and \cite{Sebesta} was carried out for a forward-looking imaging system, whereas here we have treated a side looking imaging system.  For forward-looking imaging systems, when the vehicle or animal transits the line to the feature point, the image point will have disappeared from the field of view---going outside the range of peripheral vision.  For side looking eyes (cameras), however, the images of features being transited will be located near the camera focal point or at the {\em fovea centralis}---the point on the retina where visual acuity is greatest.  This suggests that animals like bats will have an acute awareness of objects that are to one side of their flight paths.

Terms such as ``looming'' and ``time-to-collision'' have appeared infrequently in the literature of mobile robot control, with notable exceptions being \cite{Moshtagh1}, and \cite{Sebesta}.  Nevertheless, there is a growing literature dealing with biological sensorimotor control in which vision based sensing of the time-to-transit key feature points is seen as playing a central role in regulating motion behaviors (\cite{Kaiser},\cite{Lee-Reddish},\cite{Wang1}).

\section{Distance maintenance using optical flow}
\label{sec:jb:distanceMaintain}

In \cite{Sebesta}, we showed that the difference between times-to-transit for two separated features could be used as a feedback signal to guide a vehicle along a path between them.  We now turn our attention to the problem of using optical flow to steer a vehicle along a path lying a short distance outside the boundary of an array of obstacles.  The question we would like to answer is whether paths that resemble our reconstructed bat motions can be generated.
\begin{figure}[ht] 
\begin{center}\includegraphics[width=2.7in]{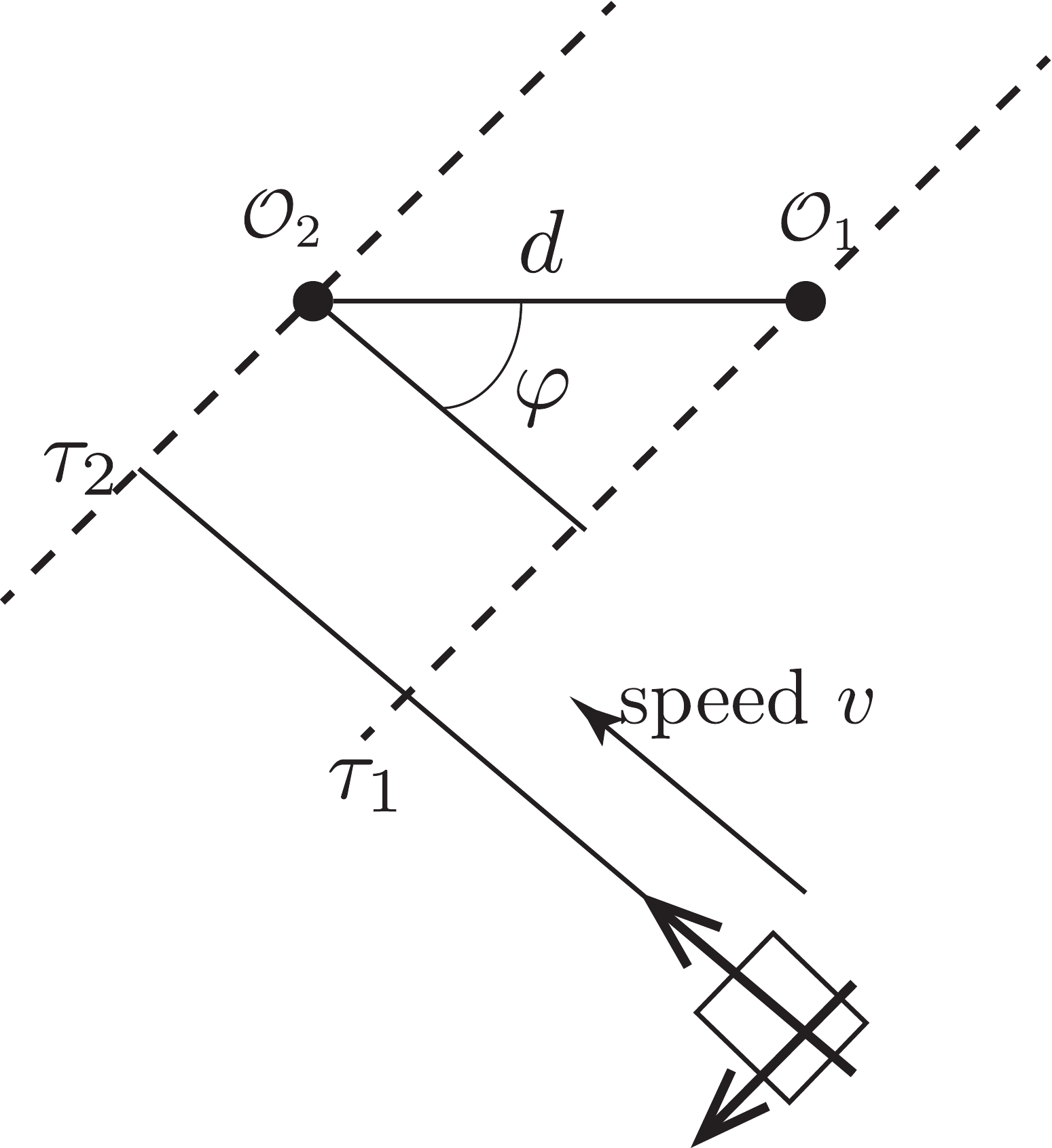}\end{center}
\caption{The difference between $\tau_2$ and $\tau_1$ is maximized if the line of travel is parallel to the line segment joining ${\cal O}_1$ and ${\cal O}_2$.} 
\label{fig:jb:MaxTransit} 
\end{figure}  
Consider the vehicle motion at constant speed $v$ that is depicted in Fig.\ \ref{fig:jb:MaxTransit}.  The features ${\cal O}_1$ and ${\cal O}_2$ are $d$ units of distance from one another and register in the vehicle's side looking imaging system that was described in Section \ref{sec:jb:motion}.  It is clear that the difference between the transit times $\tau_2$ and $\tau_1$ is given by $d\cos\varphi /v$, and this difference will be maximized when $\varphi=0$.  In other words, if the movement can be controlled in a way that ensures that $\tau_2-\tau_1$ is maximized, the direction of travel will be aligned with the line segment from ${\cal O}_1$ to ${\cal O}_2$.

It is well established that animals are good at following sensory gradients (\cite{Grunbaum},\cite{Stocker}).  Because of this, we have adopted a working conjecture that the bat trajectories discussed in Section \ref{sec:jb:BatData} follow the edge of a wooded area by means of maximizing differences in time-to-transit of features along the boundary.  (See Fig.~\ref{fig:Kayhan:Trajectories}.)  To prove that this approach to navigation is theoretically feasible, we note that apart from our observation that the time-to-transit can be determined entirely from the optical flow that is sensed by a side-looking imaging system, it is also noted as in \cite{Sebesta} that time-to-transit is a purely geometric quantity.  Referring to Fig.~\ref{fig:jb:OpticalFlow}, we adopt a simple kinematic model of planar motion
\begin{equation}
\left(\begin{array}{c}
\dot x \\
\dot y \\
\dot\theta\end{array}\right) = \left(\begin{array}{l}
v\cos\theta \\
v\sin\theta \\
u\end{array}\right),
\label{eq:jb:BasicVehicle}
\end{equation}
where $v$ is the forward speed in the direction of the body-frame $x$-axis, and $u$ is the turning rate.  By carrying out an analysis that differs from \cite{Sebesta} only in minor details, we find that the time-to-transit a feature located at $(x_w,y_w)$ in the world frame by a vehicle traveling at constant speed $v$ and having configuration $(x,y,\theta)$ (in world frame coordinates) is
\begin{equation}
\tau=\frac{\cos\theta\ (x_w-x) + \sin\theta\ (y_w-y)}{v}.
\label{eq:jb:tau}
\end{equation}
Assuming the feature point $(x_w,y_w)$ is fixed in the world frame and letting only $\theta$ vary, we find that $\tau=\tau(\theta)$ is maximized when the vehicle is aimed directly at the feature point.  On the other hand, for two separated features ${\cal O}_1,{\cal O}_2$ located at $(x_1,y_1),(x_2,y_2)$ respectively, the heading that maximizes the difference in times-to-transit, $\tau_2(\theta)-\tau_1(\theta)$, is clearly aligned (as noted above) with the direction $(x_2-x_1,y_2-y_1)$, and at this heading we have $\tau_2{'}(\theta)-\tau_1{'}(\theta)=0$.

\begin{theorem}
Consider point features ${\cal O}_1,{\cal O}_2$ located respectively at $(x_1,y_1)$ and $(x_2,y_2)$.  Let $\tau_j(t)$ be the time-to-transit associated with feature ${\cal O}_j$ for $j=1,2$.  Suppose the initial orientation, $\theta_0$,  of the vehicle is such that $\tau_2>\tau_1$ (which implies that $\cos\theta\,(x_2-x_1)+\sin\theta\,(y_2-y_1)>0$). Further assume that the vehicle travels at constant speed $v=1$.  Then for any $k>0$, the steering control
\begin{equation}
u=u(t)=k\, [\tau_2{'}(\theta(t))-\tau_1{'}(\theta(t))],
\label{eq:jb:control}
\end{equation}
where $\tau_j^{'}(\theta)=\frac{\partial \tau_j}{\partial\theta}$, will asymptotically align the vehicle with the line segment directed from ${\cal O}_1$ to ${\cal O}_2$ \label{thm:jb:Steering}
\end{theorem}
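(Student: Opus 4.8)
\emph{Proof proposal.}
The plan is to insert the closed form \eqref{eq:jb:tau} of the time-to-transit into the feedback \eqref{eq:jb:control} and to notice that, because $\mathcal O_1$ and $\mathcal O_2$ are fixed in the world frame, the control depends on the heading $\theta$ alone. Differentiating \eqref{eq:jb:tau} with respect to $\theta$ (holding the vehicle position fixed, as the notation $\tau_j'(\theta)$ intends) gives $\tau_j'(\theta) = -\sin\theta\,(x_j-x)+\cos\theta\,(y_j-y)$, hence $\tau_2'(\theta)-\tau_1'(\theta) = -\sin\theta\,(x_2-x_1)+\cos\theta\,(y_2-y_1)$, which no longer involves $(x,y)$. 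Consequently the $\theta$-equation of \eqref{eq:jb:BasicVehicle} closes on itself, $\dot\theta = k\big[-\sin\theta\,(x_2-x_1)+\cos\theta\,(y_2-y_1)\big]$, and the $(x,y)$ dynamics, being driven by this scalar equation, need not be examined.

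Next I would write the baseline in polar form, $(x_2-x_1,\,y_2-y_1)=d(\cos\psi,\,\sin\psi)$, where $d>0$ is the distance between the features and $\psi$ is the world-frame heading of the directed segment from $\mathcal O_1$ to $\mathcal O_2$. A one-line identity then gives $\tau_2'(\theta)-\tau_1'(\theta)=d\sin(\psi-\theta)$ and, similarly, $\tau_2-\tau_1 = d\cos(\psi-\theta)$. Writing $\varphi:=\psi-\theta$ for the heading error (this is, up to sign, the angle appearing in Fig.~\ref{fig:jb:MaxTransit}, and $\psi$ is constant), the closed-loop system reduces to the scalar pendulum-type equation $\dot\varphi=-kd\sin\varphi$, while the hypothesis $\tau_2>\tau_1$ becomes exactly $\cos\varphi(0)>0$, i.e.\ $\varphi(0)\in(-\pi/2,\pi/2)$.

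It then remains to analyze $\dot\varphi=-kd\sin\varphi$, which is elementary: on the circle its equilibria are $\varphi=0$ and $\varphi=\pi$, the sign of $-\sin\varphi$ makes $\varphi=0$ asymptotically stable and $\varphi=\pi$ unstable, and the arc $(-\pi/2,\pi/2)$ is forward invariant because $\dot\varphi<0$ at $\varphi=\pi/2$ and $\dot\varphi>0$ at $\varphi=-\pi/2$. On that arc $\varphi$ and $\sin\varphi$ share the same sign, so $\tfrac{d}{dt}\varphi^2 = -2kd\,\varphi\sin\varphi<0$ whenever $\varphi\neq0$; thus $\varphi^2$ decreases monotonically to a limit, which can only be an equilibrium in the interval, namely $0$. (Equivalently, $V(\varphi)=1-\cos\varphi$ is a strict Lyapunov function on $(-\pi,\pi)$ with $\dot V=-kd\sin^2\varphi$.) Hence $\varphi(t)\to0$, i.e.\ $\theta(t)\to\psi$, so the heading asymptotically aligns with the direction from $\mathcal O_1$ to $\mathcal O_2$. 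I do not foresee a real obstacle here: the only substantive observation is that the feedback is position-independent, collapsing the three-dimensional system to a one-dimensional one; everything afterward is standard, the sole point needing care being the verification that the hypothesis $\tau_2>\tau_1$ places the initial heading error strictly inside the basin of $\varphi=0$ and away from the repeller at $\varphi=\pi$.
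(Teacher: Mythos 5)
Your proposal is correct and follows essentially the same route as the paper: the decisive step in both is that $\tau_2'(\theta)-\tau_1'(\theta)=-\sin\theta\,(x_2-x_1)+\cos\theta\,(y_2-y_1)$ is independent of the vehicle's position, so the heading dynamics decouple, and your Lyapunov function $1-\cos\varphi$ is exactly the paper's $V(\theta)=1-\cos\theta\,v_x-\sin\theta\,v_y$ after the substitution $\varphi=\psi-\theta$. The only cosmetic difference is that the paper closes the argument with LaSalle's invariance principle on the compact invariant set $\{V\le 1\}$, while you finish with a direct phase-line analysis of $\dot\varphi=-kd\sin\varphi$; both use the hypothesis $\tau_2>\tau_1$ in the same way, to place the initial heading error inside the basin of $\varphi=0$ and away from the antipodal equilibrium.
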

\begin{proof}
Let $(v_x,v_y)$ designate the planar direction from feature ${\cal O}_1$ to ${\cal O}_2$:
\[
\left( \begin{array}{c} v_x \\ v_y \end{array}\right) = \left( \begin{array}{c} x_2-x_1 \\ y_2-y_1 \end{array}\right)\frac{1}{\sqrt{(x_2-x_1)^2+(y_2-y_1)^2}}.
\]
The function $V(\theta)=1-\cos\theta\,v_x-\sin\theta\,v_y$ will serve as a Lyapunov function with Lie derivative
\[
\begin{array}{ccl}{\cal L}_uV(\theta(t))&=&\frac{\partial V}{\partial\theta}\cdot\dot\theta\\[0.1in]
& = & -k\rho\,(-\sin\theta\,v_x+\cos\theta\,v_y)^2,\end{array}
\]
where $\rho=\sqrt{(x_2-x_1)^2+(y_2-y_1)^2}$.  The set $V(\theta)\le 1$ is compact and invariant under the motion, and by LaSalle's theorem the motion evolves asymptotically toward the set
\[
\{\theta\,:\,V(\theta)\le 1\}\cap \{\theta\,:\,{\cal L}_uV(\theta)=0\}.
\]
The unique value of $\theta$, $0\le\theta<2\pi$, lying in this set is such that $(\cos\theta,\sin\theta)=(v_x,v_y)$.  This specifies that the asymptotic direction of the motion is aligned with the  line from ${\cal O}_1$ to ${\cal O}_2$ as stated in the theorem.
\end{proof}

\begin{remark}
The theorem is conservative in the sense that the hypotheses are intended to restrict the initial conditions to configurations in which the moving vehicle has a non-zero component of its motion in the direction of the line from ${\cal O}_1$ to ${\cal O}_2$.   If the vehicle had access to its orientation $\theta$ in space and to the world-frame coordinates of the features $(\,(x_1,y_1),(x_2,y_2)\,)$, then the control law (\ref{eq:jb:control}) could be written $u(t)=k\,[-\sin\theta\,(x_2-x_1)+\cos\theta\,(y_2-y_1)]$.  Assuming the system has access this global configuration information, the control law will steer the vehicle to alignment with the feature from every initial configuration except those in which the initial $\theta_0$ has the vehicle aligned with the direction $(x_1-x_2,y_1-y_2)$ (i.e.\ aligned in exactly the opposite direction from the goal alignment of the theorem).  The existence of this singular direction is a familiar characteristic of kinematic control laws for vehicle models of the form (\ref{eq:jb:BasicVehicle}), \cite{Baillieul03}.
\end{remark}

\begin{remark}
We note that the value of $\tau$ that is associated with a visible feature as specified in (\ref{eq:jb:tau}) is a purely geometric quantity.  It will be negative in the case that the feature lies behind the vehicle on its current line of flight.  For side looking eyes, it could still be visible, and would be perceived as being both negative and becoming increasingly negative as the motion continues.  The important point is that the control law (\ref{eq:jb:control}) does not assume that either feature lies ahead of the vehicle on its current heading.
\end{remark}

\begin{remark}
The theorem states what is achieved by keeping the difference $\tau_2-\tau_1$ at its maximum value.  How flying animals might detect this maximum is not understood, but we speculate that small magnitude saccadic eye movements might be used as an energy efficient means of detecting what corrections to the vehicle's heading are needed to keep $\tau_2-\tau_1$ at its maximum value.
\end{remark}

\begin{remark}
({\em Model validity})  Field observations together with the unsmoothed reconstructed flight data make clear that the relatively smooth trajectories that are produced by the model (\ref{eq:jb:BasicVehicle}) do not capture the continual rapid, short-distance lateral, pitching, and rolling motions that make the animal flight movements anything but smooth.  It is our working assumption that these high-frequency deviations from smoothness can be thought of as noise that can be ignored in our initial attempt to synthesize vision-based bat-like trajectories.
\end{remark}

\begin{figure}[ht] 
\begin{center}\includegraphics[width=3.5in]{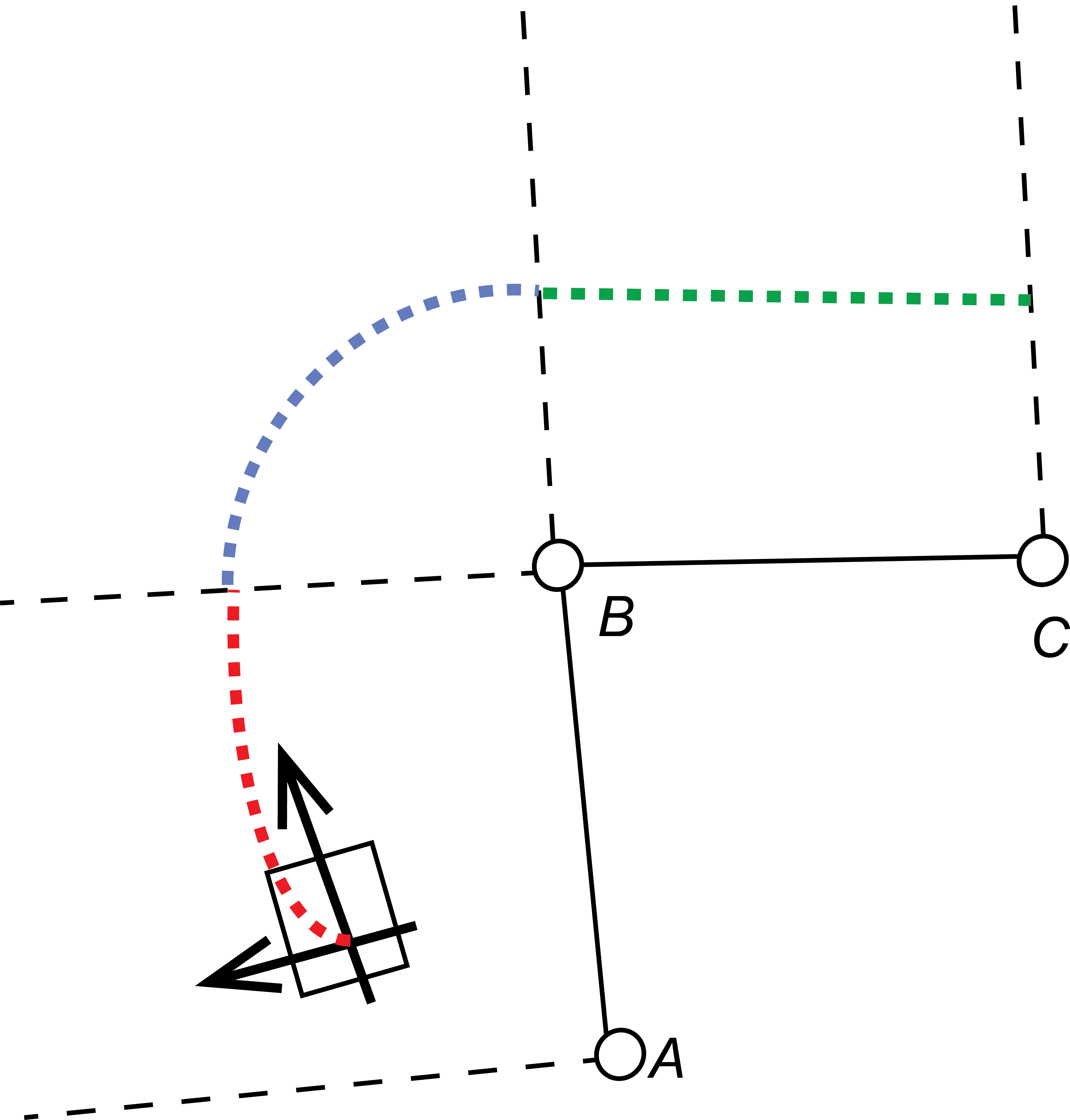}\end{center}
\caption{Optical flow sensing can be used to guide the idealized vehicle to maintain a constant distance (dashed line) from three noncolinear obstacles $A, B$, and $C$. Here there are three trajectory segments: the maximized difference of time-to-transit is used to align the vehicle with features $A$ and $B$ on the first (red line); then time-to-transit with respect to feature $B$ is held constant (blue), and finally, the control law returns to maximizing the difference of time-to-transit to align the trajectory with the line from $B$ to $C$ (green).}
\label{fig:jb:Triangle} 
\end{figure}

\subsection{Distance-keeping to a curved row of objects}

To study the curvature of trajectories prescribed by (\ref{eq:jb:BasicVehicle}), we normalize the problem such that $(x_1,y_1)=(0,0),\,(x_2,y_2)=(1,0)$ and $v=1$.  The steering control is then written $u(t)=-k\sin\theta(t)$.  Note that this is just the curvature.  It follows as a corollary of Theorem  \ref{thm:jb:Steering} and Remark 1 that for $\theta_0\ne \pi$, the absolute value $|u(t)|$ is a monotonically decreasing function that asymptotically approaches zero.  The rate of decrease is determined by the gain parameter $k$, and for any threshold $\alpha>0$, we can explicitly compute the time $T_{\alpha}(k)$ at which $|u|$ becomes less than $\alpha$.  This is made precise in the following.

\begin{theorem}
Consider the trajectory (\ref{eq:jb:BasicVehicle}) prescribed by $v\equiv 1$ and the steering law
\begin{equation}
u(t)=-k\,\sin\theta(t).
\label{eq:jb:control1}
\end{equation}
Let  $k>0$ and $0<\alpha<k|\sin\theta_0|$.
Then $|u(t)|$ is a monotonically decreasing function that takes on the value $\alpha$ at time $t=T_{\alpha}(k)$, where
\[
T_{\alpha}(k)=\frac{1}k\left[\log \left(\frac{\theta_0 }{2}\right)-\log \left(\tan \left(\frac{1}{2} \sin
   ^{-1}\left(\frac{\alpha}{k}\right)\right)\right)\right].
\]
\end{theorem}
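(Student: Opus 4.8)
The plan is to exploit the normalization introduced in the paragraph preceding the theorem: with $(x_1,y_1)=(0,0)$, $(x_2,y_2)=(1,0)$ and $v\equiv 1$, the steering law~(\ref{eq:jb:control1}) reduces the $\theta$-component of~(\ref{eq:jb:BasicVehicle}) to the autonomous scalar ODE $\dot\theta=-k\sin\theta$, which is separable and solvable in closed form. The hypothesis $\tau_2>\tau_1$ inherited from Theorem~\ref{thm:jb:Steering} becomes $\cos\theta_0>0$ in these coordinates, so the relevant range is $\theta_0\in(-\pi/2,\pi/2)$; by the symmetry $\theta\mapsto-\theta$ of the equation it suffices to treat $\theta_0\in(0,\pi/2)$, the value $\theta_0=0$ being excluded by the assumption $\alpha<k|\sin\theta_0|$.

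First I would integrate the ODE by separation of variables. Using $\int\csc\theta\,d\theta=\log\tan(\theta/2)$ one obtains
\[
\tan\frac{\theta(t)}{2}=\tan\frac{\theta_0}{2}\,e^{-kt}.
\]
Since $\tan(\theta_0/2)\in(0,1)$, the solution remains in $(0,\theta_0)\subset(0,\pi/2)$ for all $t>0$, decreasing monotonically to $0$; in particular the flow is complete on $[0,\infty)$ and no sign ambiguities arise in the subsequent bookkeeping. Monotonicity of $|u|$ then follows immediately: on the invariant interval $(0,\pi/2)$ we have $|u(t)|=k\sin\theta(t)$, and since $\theta(t)$ is strictly decreasing while $\sin$ is strictly increasing on $(0,\pi/2)$, $|u(t)|$ is strictly decreasing. (This is precisely where the restriction $|\theta_0|<\pi/2$ is used: for $\theta_0\in(\pi/2,\pi)$ the trajectory would first sweep through $\theta=\pi/2$, where $|u|$ attains its maximum $k$, and monotonicity would fail.)

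Finally, to identify $T_\alpha(k)$ I would impose $|u(T)|=\alpha$, i.e.\ $\sin\theta(T)=\alpha/k$; because $0<\alpha/k<\sin\theta_0<1$ this determines $\theta(T)=\sin^{-1}(\alpha/k)\in(0,\theta_0)$ on the principal branch. Substituting into the closed-form solution and solving for $T$ gives
\[
T_\alpha(k)=\frac1k\left[\log\tan\frac{\theta_0}{2}-\log\tan\!\left(\tfrac12\sin^{-1}\tfrac{\alpha}{k}\right)\right].
\]
Since $\sin^{-1}(\alpha/k)<\theta_0$, the bracket is positive, so $T_\alpha(k)>0$, and by the monotonicity already established it is the unique time at which $|u|$ equals $\alpha$. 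This matches the formula in the statement, reading the leading term $\log(\theta_0/2)$ as $\log\tan(\theta_0/2)$ (equivalently, as its small-angle form). I do not anticipate a genuine obstacle: the solution of the ODE is elementary, and the only points needing care are making the implicit hypothesis $\theta_0\in(-\pi/2,\pi/2)$ explicit so that $|u|$ is truly monotone, and tracking the absolute values and inverse-trig branches correctly when inverting the closed-form solution.
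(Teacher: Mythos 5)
Your proof is correct and follows essentially the same route as the paper: integrate the separable equation $\dot\theta=-k\sin\theta$ in closed form to get $\tan(\theta(t)/2)=\tan(\theta_0/2)e^{-kt}$, use monotonicity of $\theta(t)$ to get monotonicity of $|u|$, and invert to obtain $T_\alpha(k)$. You have in fact tightened two points the paper glosses over: you made explicit that monotonicity of $|u|$ requires $|\theta_0|<\pi/2$ (for $\theta_0\in(\pi/2,\pi)$ the curvature magnitude first rises to its maximum $k$ at $\theta=\pi/2$ before decaying, so the paper's blanket claim for $\theta_0\ne\pi$ is too strong, though the restriction is consistent with the $\tau_2>\tau_1$ hypothesis inherited from the preceding theorem), and you correctly identified that the leading term of the displayed formula should read $\log\tan\left(\frac{\theta_0}{2}\right)$ rather than $\log\left(\frac{\theta_0}{2}\right)$ --- a typo in the paper, as direct substitution of $\theta(T)=\sin^{-1}(\alpha/k)$ into the closed-form solution confirms.
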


\begin{proof}
First we note that the steering equation
\[
\dot\theta=-k\sin\theta,\ \theta(0)=\theta_0
\]
can be integrated explicitly in closed form to give
\[
\theta(t)=2 \tan^{-1}\left(\tan(\frac{\theta_0}{2})\,e^{-kt}\right).
\]
The signed curvature of the trajectory generated by (\ref{eq:jb:BasicVehicle}) is given by
\[
\kappa(k,t)=u(t)=-k\,\sin\left[2\tan^{-1}\left(\tan(\frac{\theta_0}{2})\,e^{-kt}\right)\right].
\]
For each $k$, as noted above, $\kappa(k,t)$ is a monotonically decreasing function of $t$ so that for each $\alpha$ with $0<\alpha<k\,|\sin(\theta_0)|$, we have a unique solution $t>0$ to the equation $\alpha=\kappa(k,t)$.  Elementary but slightly tedious algebra yields this solution as
\[
t_{\alpha}=T_{\alpha}(k)=\frac{1}k\left[\log \left(\frac{\theta_0 }{2}\right)-\log \left(\tan \left(\frac{1}{2} \sin
   ^{-1}\left(\frac{\alpha}{k}\right)\right)\right)\right],
\]
proving the theorem.
\end{proof}
The theorem is useful in understanding the time required for the control law (\ref{eq:jb:control}) to align the flight path with a given straight line direction.  Another potentially useful control law that can be based on sensed optical flow is one that keeps the time-to-transit fixed at zero.  Consider three features forming the vertices of a triangle as depicted in Fig.~\ref{fig:jb:Triangle}.  The vehicle (\ref{eq:jb:BasicVehicle}) can be steered around the three features by using the following control protocol.  First, assuming the vehicle starts at or below the dashed line segment through feature $A$, apply  control law law (\ref{eq:jb:control}) (more precisely, the control  $u(t)=k\, [\tau_{B}{'}(\theta(t))-\tau_{A}{'}(\theta(t))]$) so that it aligns itself with the segment $AB$.  As soon as the condition $\tau_B=0$ is met (i.e. when the vehicle transits the dashed line through $B$ that is perpendicular to its line of travel), the vehicle switches to a control to keep $\tau_B(t)\equiv 0$.  This will cause the vehicle to execute a circular arc that maintains a constant distance from feature $B$.  This control continues to be applied until the difference $\tau_C(t)-\tau_B(t)$ attains a maximum value, at which time the vehicle switches to the control law $u(t)=k\, [\tau_{C}{'}(\theta(t))-\tau_{B}{'}(\theta(t))]$.  It will continue in this way in a direction parallel to the segment $BC$ at the same distance to the side of the segment.  In terms of the notation of Table 1, the path segmentation is prescribed by $u_d[A,B;t]\rightarrow u_c[B;t] \rightarrow u_d[B,C;t]$, with the notation that will be explained in the next section.

Again, it is important to emphasize that we have not attempted to incorporate neurologically based models of how bats might sense that transit time differences are at a maximum value or how they might fly so as to keep a transit time constant.  Optimum seeking control laws and control laws that steer vehicles so as to keep sensed quantities constant are well known (\cite{Baronov}), but the details of how such control can be carried out using vision and other sensing by the animals remains an open question.  In the next section, we consider a number of vision-based control strategies that give rise to flight paths resembling the bat trajectories described in Section \ref{sec:jb:BatData}.


\section{Flight simulations for understanding feature based animal navigation}
\label{sec:jb:Simulations}

A complete understanding of the sensorimotor dynamics that produce the flight patterns described in Section \ref{sec:jb:BatData} remains a distant goal.  Nevertheless, it is of interest to discover what can be learned from trying to reproduce animal-like trajectories with suitably tuned versions of the simple control laws and protocols that have been discussed in the preceding sections.  At the outset, there are two related but fundamentally different questions.  First, how closely can we come to producing a ``typical'' bat trajectory using an idealized flight vehicle with various vision-based control laws of the form we have described.  Here the term ``typical trajectory,''  refers to the mean trajectories described in Section \ref{sec:jb:BatData}.  The second question, which may be more difficult, is can we produce simple models and protocols that predict and replicate the variability among the bats in this simple setting.  This is to say, can we find control laws and protocols such that all 254 trajectory reconstructions can be reproduced by simple variations of the model parameters.  

In addressing these questions, we are considering what we call {\em single feature} and {\em paired feature} optical flow based control.  An example of the former is the circling control law (keeping time-to-transit a feature constant), and examples of the latter are the control law (\ref{eq:jb:control}) and the time-to-transit based law proposed in \cite{Sebesta} for flying between two features.  These laws may be unrealistically simple, and if it is indeed the case that animals navigate by means of optical flow and other forms of visual feedback, the numbers of features they react to at any given instant of a flight may be quite high.  As we previously reported (\cite{Sebesta}), sparse optical-flow algorithms (BRISK and FREAK) are being studied together with approaches  to extracting meaningful control actions based on the output of such algorithms.  This work is beyond the scope of the present paper.  

\vspace{0.0in}
\begin{table*}[ht]
\begin{center}
\begin{tabular}{||c|l|l||}
\hline
\multicolumn{3}{||c||}
{{\em Time-to-Transit} vision-based steering controls}\\ \hline
$u_c[{\cal O};t] $& single-feature control & keeps $\tau_{\cal O}$ constant\\
& ---feature circling & follows circular arc at\\
&&  constant radius from\\
&& feature ${\cal O}$,\\ \hline
$u_d[{\cal O}_1,{\cal O}_2;t]$ & paired-feature control & aligns with the line \\
& ---distance maintenance & segment from ${\cal O}_1$ to ${\cal O}_2$;\\
&&  see Thm.~4.1.\\ \hline
$u_p[{\cal O}_1,{\cal O}_2;t]$ & paired feature control & control law from \cite{Sebesta}\\
& --- steers between features & to steer vehicle on\\
&& path between ${\cal O}_1,{\cal O}_2$.\\ \hline
\end{tabular}\\[0.1in]
{\bf Table 1}
\end{center}
\end{table*}
We turn instead to considering whether an appropriately sequenced set of {\em single feature} and {\em paired feature} vision-based control control laws can produce an animal-like trajectory in the habitat described in Section \ref{sec:jb:BatData}. 
While the analysis of \cite{Sebesta} and Section \ref{sec:jb:distanceMaintain} above characterizes the motions produced by each control law in Table 1, the key to constructing complete flight paths will be the creation of biologically meaningful rules and protocols for switching between these laws.  For the case of {\em M.~velifer}, it is known that echolocation is the dominant distance sensing modality, and even in the case that the animal motions described in Section \ref{sec:jb:BatData} are governed primarily by vision-based steering, it is likely that some distance information from echolocation plays a role in determining the flight paths.  It is worth noting, however, that  {\em M.~velifer} may be using visual distance cues to supplement what is obtained from echolocation.  For animals with sideward-looking eyes, the time-to-transit $\tau_{\cal O}$ associated with any feature ${\cal O}$ is independent of how far off to the side of the flight path the feature lies.  Visual cues for this exist, however.  Elementary geometric consideration regarding the schematic 
of Fig.~\ref{fig:jb:OpticalFlow} show that in the final instant before transit, the image of a distant feature will lie realtively much closer to the {\em fovea centralis} (center of focus) than a nearby feature.  At this point, our research is agnostic about how the animals detect distance to features, but a detailed statistical analysis of the 254 trajectories chosen for study indicates that distance to key feature points has a significant effect on the geometry of the flight paths.

\begin{figure}[ht] 
\begin{center}\includegraphics[width=5.0in]{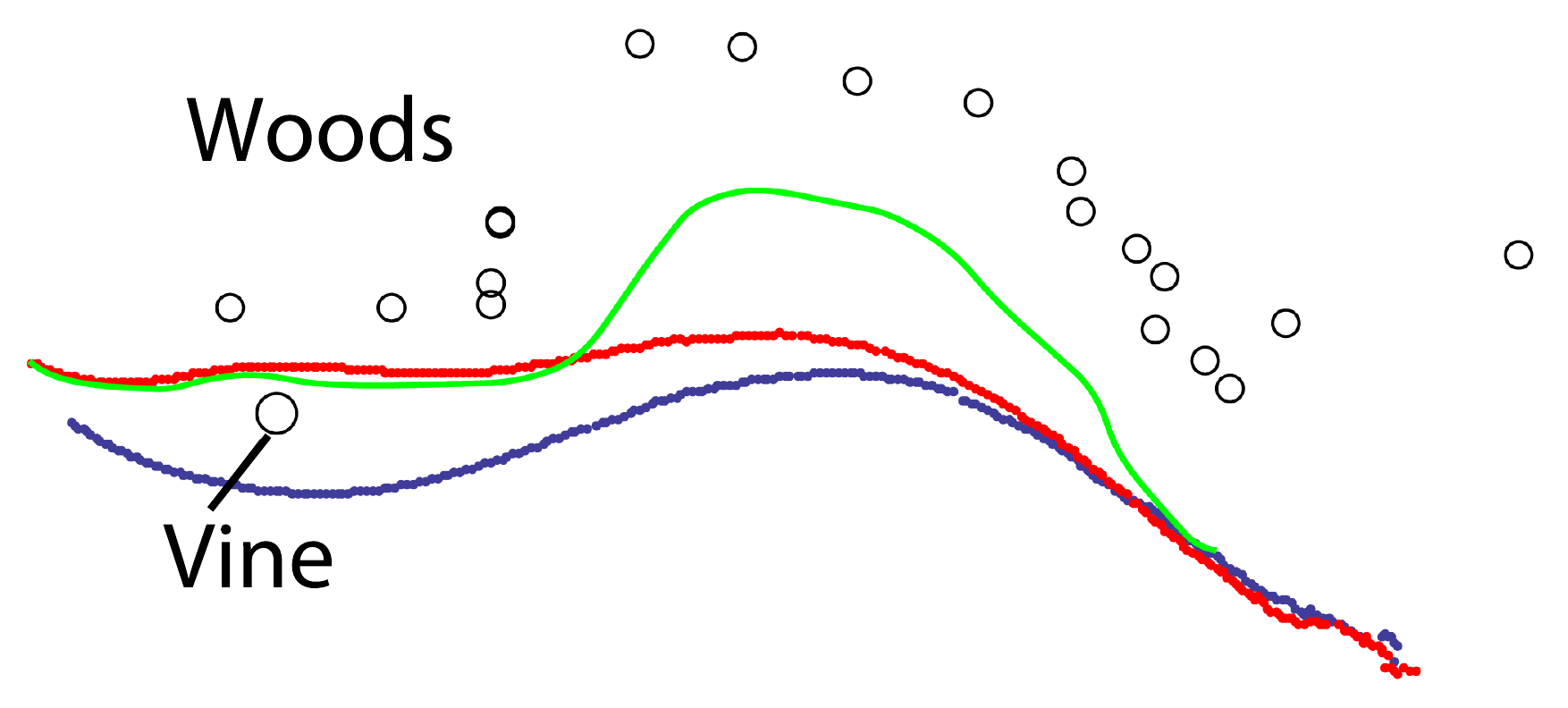}\end{center}
\caption{The tree features are those on the boundary of the wooded area depicted in Fig.~\ref{fig:Kayhan:Trajectories}. The green trajectory has been generated using the distance-keeping control described in the text.  The red curve is the average of the bat trajectories that avoided the vine obstruction on the left, while the blue curve is the average of those that passed on the right. It is interesting to note that the averaged bat trajectories do not make the same excursion into the woods, but that all trajectories come together at the end of the observed flight sequence.} 
\label{fig:jb:BambergerWoods} 
\end{figure}  

We conclude the paper by discussing the synthesis of two different trajectories using sequences of the control primitives listed in Table 1.
Biologists have been interested for some time in the question of whether navigation strategies are cue-directed or rely more on spatial memory~\cite{schnitzler2003spatial}. The cue-directed strategies employ sensed cues from the environment to choose the route of travel while the spatial memory strategies have the bats using a learned topological or even metric representation of their environment to choose their route. Evidence can be found that both strategies are employed by animal flyers. For instance, it has been found that \textit{Eptesicus fuscus} use acoustic landmarks for spatial orientation~\cite{jensen2005echolocating}. Bats can quickly locate a net opening by using a landmark even though the locations of the landmark and the net are changed (their relative positions are fixed). It has also been found that \textit{Glossophaga commissarisi} primarily use spatial memory to locate food targets even though they remember both spatial and object attributes of their target~\cite{thiele2005hierarchical}. It is likely that bats used both strategies during their navigation: the spatial memory strategy stores a coarse-grained representation of the environment and achieves long-term goals, such as arriving at a feeding ground, while the cue-directed strategy stores information of sensory features of the landmarks (nodes of the spatial representation) and achieves short-term goals, such as arriving near a landmark or avoiding obstacles. 

We turn to the question of how these navigation strategies can be interpreted and understood in terms of the sensorimotor laws proposed above.  If there are synthesized trajectories that are well matched to the observational data, these may provide evidence to support the animals' use of one or the other navigation strategies. We thus seek to synthesize trajectories under two different hypotheses:
\begin{enumerate}
\item \textit{Cue-directed Strategy} With this strategy, bats generate their control based purely on sensory information about environmental features (tree trunks and branches and the two obstacles [vine, pole]in our case). This strategy can be understood as a stimulus-response type control. \label{hypothesis_1}
\item \textit{Integrated Strategy} With this strategy, bats use landmarks specified by their spatial memory to select features from their environment and then generate control based on these filtered features. This strategy can be understood as a hierarchical control with landmarks and spatial memory specifying sub-goals or switching points and local sensory features regulating lower-level control.\label{hypothesis_2}
\end{enumerate}

\begin{figure}[!tbh]
\centering
\includegraphics[width=.8\columnwidth]{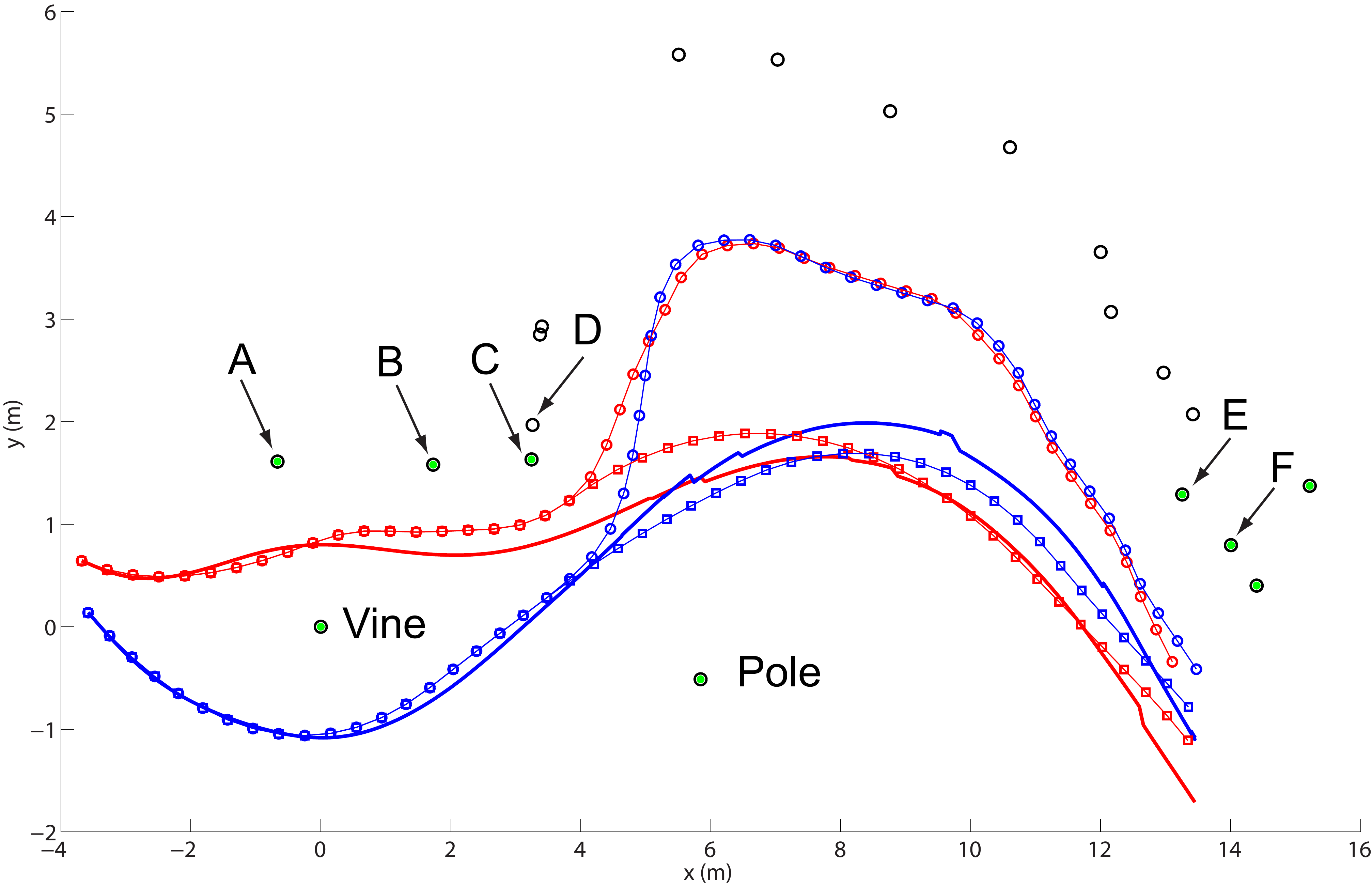}
\caption{Mean bat trajectories (solid curves), synthesized trajectories based on Hypothesis \ref{hypothesis_1} (curves with circles) and synthesized trajectories based on Hypothesis \ref{hypothesis_2} (curves connecting squares).  Features used for navigation under Hypothesis \ref{hypothesis_1} are marked by black circles. The subset of features used under Hypothesis \ref{hypothesis_2} are marked by circles with green interior. Red and blue curves correspond to groups of bats passing the vine and the left and the right, respectively.} \label{fig:Spatial_Memory}
\end{figure}

Fig.~\ref{fig:Spatial_Memory} shows the synthesized trajectories based on these two hypotheses as well as the mean bat trajectories for groups of bats passing the vine from the left and the right. Three types of control laws from Table 1 are used in the syntheses of these trajectories: the distance maintenance control law $u_d[{\cal O}_1,{\cal O}_2]$, the circling control law $u_c[{\cal O}_1]$ and the safe passing control law $u_p[{\cal O}_1,{\cal O}_2]$, where ${\cal O}_1$ and ${\cal O}_2$ are the features used in a particular control law. Notice that as stated before $u_d$ and $u_p$ act on feature pairs while $u_c$ acts on a single feature. The four trajectories are generated by the following sequences of controlled motion segments:  (Feature labels $A,B,\dots$ refer to Fig.~\ref{fig:Spatial_Memory}.
\begin{enumerate}
\item Red curve with squares: $u_p[A, \text{vine}] \rightarrow u_d[A,B] \rightarrow u_d[B,C] \rightarrow u_c[\text{pole}] \rightarrow u_d[E,F] \rightarrow u_d[\cdot,\cdot]$ for the remaining features;
\item Blue curve with squares: $u_c[\text{vine}] \rightarrow u_d[B,C] \rightarrow u_c[\text{pole}]\rightarrow u_d[E,F] \rightarrow u_d[\cdot,\cdot]$ for the remaining features;
\item Red curve with circles: $u_p[A, \text{vine}] \rightarrow u_d[A,B] \rightarrow u_d[B,C] \rightarrow u_c[C] \rightarrow u_d[C,D] \rightarrow u_d[\cdot,\cdot]$ for the remaining features;
\item Blue curve with circles: $u_c[\text{vine}] \rightarrow u_d[B,C] \rightarrow u_c[C] \rightarrow u_d[C,D] \rightarrow u_d[\cdot,\cdot]$ for the remaining features.
\end{enumerate}

As shown by the relatively high rise of the circle-interpolated curves in Fig.~\ref{fig:Spatial_Memory}, if a cue-directed strategy is adopted by the bats, after passing point C, both of these two groups would keep following the edge of the woods. On the other hand, if an integrated strategy is adopted by the bats, after passing point C, the bats can use their spatial memory or the memorized sensory features of significant landmarks (such as the trees marked as E and F) for navigation and then use these features to generate their control. Using the integrated strategy, instead of continuing to follow the border of the woods, the bats would take a short cut. By comparing the synthesized trajectories based on these two hypotheses and the mean bat trajectories, it can be found that the trajectories based on the integrated view fit the observed mean trajectories better. Such an observation suggests that the bats indeed adopt an integrated strategy for their navigation. (A pure spatial memory strategy can be rejected by the observation that none of the bats collided with the pole, which was placed there by the researchers during the experiment and unlikely memorized by the bats.)

An advantage of the integrated strategy over the cue-directed strategy is its time and energy efficiency. By taking the short cut (optimized route) instead of strictly following the forest edge, the bats can save flight time. In addition, by developing a stereotypical flight route---which is observed in our data and other experiments \cite{thiele2005hierarchical}
---and efficient sensorimotor coordination, the bats can also save energy. Flight is an energetically costly mode of locomotion (\cite{Thomas}). In addition to the metabolic demands of flight, many of the individuals roosting in these summer colonies during the time of filming were lactating females that are allocating significant energy resources to parental care (\cite{Fitch,Kurata}). Thus, a behavior that limits energy expenditure by exploiting shortcuts should be favored, even though these individuals are more susceptible to predation by aerial carnivores. 
It is important to note, however, that dealing only with mean behavior may give a misleading picture of the characteristics displayed by the animals.  They exhibit considerable diversity in the way they react to features in the environment, and this is noted in Fig.~\ref{fig:jb:HypothesisTesting}.  Clearly a significant percentage of the paths is consistent with the hypothesis that they are replying on perceived features to navigate.

\begin{figure}[!tbh]
\centering
\includegraphics[width=.8\columnwidth]{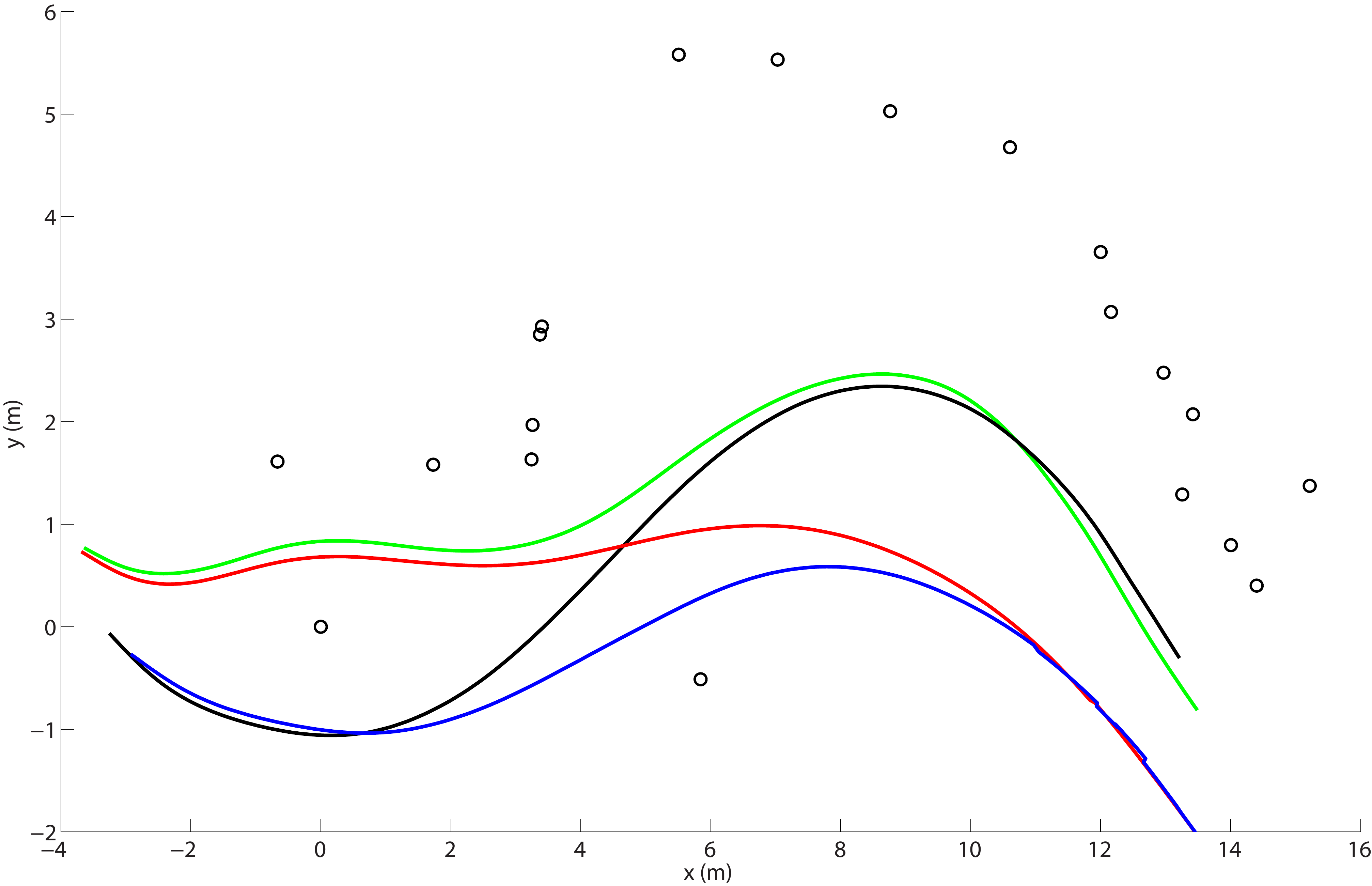}
\caption{Neither hypothesis regarding cue-based versus memory-based navigation fit all bat trajectories.  While the mean trajectories in Fig.~\ref{fig:Spatial_Memory} do not exhibit much of an excursion into the woods, we display here the mean paths of the 35\% of the bats that penetrate more than two meters beyond the vine into the woods (black and green) versus the mean paths of the 65\% that do not venture that far (red and blue).} \label{fig:jb:HypothesisTesting}
\end{figure}

\section{Discussion and Summary}
\label{sec:jb:Discussion}

This paper has given a snapshot summary of an interdisciplinary research effort by a team of biologists, computer scientists, and control theorists aimed at understanding the flight behaviors of bats---more specifically {\em M.~velfier}.  The aim of the research has been twofold.  One goal is to develop principles of navigation and flight control for UAVs that exhibit the agility and efficiency of animal flyers.  An important second goal is  to use synthesized flight behaviors to make inferences about the mechanisms of sensorimotor control employed by the bats to move through their habitat.  
Section \ref{sec:jb:Simulations} reports preliminary results on analysis and interpretation of flight behaviors created by sequences of simple vision-based steering laws.
Due to the limited space allowed in this  paper, we have not been able to discuss many results of the research---including a detailed statistical characterization of the geometric diversity of the bat trajectories described in Section \ref{sec:jb:BatData}.  These results will be presented elsewhere.  Our current work is aimed at developing a richer set of control primitives that will expand the set displayed in Table 1 and allow enhanced parametric studies of flight behaviors.  The work reported has been primarily concerned with vision-based navigation, but future research will study how species such as {\em M.~velifer} might implement flight control that is primarily based on echolocation.  Research is being planned to observe bats in flight domains where evidence for preferring one or the other sensing modality can be obtained together with a deeper understanding of how these sensory modalities are integrated.


\end{document}